\documentclass[letterpaper]{article} 
\usepackage{times}  
\usepackage{helvet}  
\usepackage{courier}  
\usepackage{url}  
\usepackage{graphicx}  
\frenchspacing  
\usepackage{amsmath,amssymb}
\usepackage{bbm}
\usepackage{graphicx}
\usepackage{mathrsfs}
\usepackage{tikz}
\usepackage{booktabs}
\usetikzlibrary{bayesnet}
\usepackage{amsthm}
\newtheorem{definition}{Definition}

\newtheorem{prop}{Proposition}

\setlength{\pdfpagewidth}{8.5in}  
\setlength{\pdfpageheight}{11in}  
  \pdfinfo{
/Title (A Probabilistic Model of the Bitcoin Blockchain)
/Author (Marc Jourdan, Sebastien Blandin, Laura Wynter, Pralhad Deshpande)
}
\title{A Probabilistic Model of the Bitcoin Blockchain}
\author{Marc Jourdan, Sebastien Blandin, Laura Wynter, Pralhad Deshpande\\
marc.jourdan@polytechnique.edu; \{sblandin, lwynter, pralhad\}@sg.ibm.com\\
IBM Research\\
10 Marina Boulevard \\
18983 Singapore \\
}
 \begin{document}
\maketitle
\begin{abstract}
The Bitcoin transaction graph is a public data structure organized as transactions between addresses, each associated with a logical entity. In this work, we introduce a complete probabilistic model of the Bitcoin Blockchain. We first formulate a set of conditional dependencies induced by the Bitcoin protocol at the block level and derive a corresponding fully observed graphical model of a Bitcoin block. We then extend the model to include hidden entity attributes such as the functional category of the associated logical agent and derive asymptotic bounds on the privacy properties implied by this model. At the network level, we show evidence of complex transaction-to-transaction behavior and present a relevant discriminative model of the agent categories. Performance of both the block-based graphical model and the network-level discriminative model is evaluated on a subset of the public Bitcoin Blockchain.
\end{abstract}    
\begin{section}{Introduction}
Analysis of the Bitcoin Blockchain~\cite{Nakamoto} is an area of intense activity~\cite{FirstFourYear,GraphPrimerBlockchain}, and one which has witnessed an explosion of interest as the value of the Bitcoin cryptocurrency has skyrocketed. Research areas include explorations of address clustering techniques to identify logical agents~\cite{7816867,7796940,7816867,AutomaticBtcClustering}, de-anonymization using side-channel attacks~\cite{NIPS2017_6735,TorDeanonimization}. 

An understanding of the properties of Bitcoin transactions is paramount to the legitimation of the cryptocurrency economy; it constitutes a building block to the conception of effective and adequate regulations~\cite{MoneyLaundering}, and to the design of novel and integrated services benefiting society as a whole.

As of $2018$, with more than $500$ million address nodes, the Bitcoin graph is comparable in size to a large social network. Yet while probabilistic models of social networks have received considerable attention, from community detection~\cite{leskovec2009community} to diffusion models and influence maximization~\cite{wilder122018maximizing}, to probabilistic graph modeling~\cite{kuter2007sunny}, probabilistic models of the Bitcoin Blockchain network have not.  

Bitcoin transactions are tantamount to a partially observed social network, within which participants can have multiple seemingly independent aliases. This distinguishes our work from classical studies on partially observed social networks, typically focused on partial observations of interactions due to sampling~\cite{handcock2010modeling}, and makes it closer to the vast body of work on entity resolution~\cite{singla2006entity,bhattacharya2006latent}.

A second challenge associated with modeling the Bitcoin Blockchain transaction network consists of capturing the complexity of the hidden structure associated with entity transactions, together with the fine-grained block-level specificities implied by the Bitcoin protocol. In particular, Bitcoin is based on an \textit{unspent transaction output} (UTXO) model, which distinguishes suitable Bitcoin Blockchain models from prior studies on credit card transactions~\cite{dal2015credit,lebichot2016graph}, since the proper generative structure needs to account for the underlying UTXO creation and deletion process.

In this work, we propose a first attempt at a comprehensive model of the Bitcoin transaction graph using a hybrid generative-discriminative model attempting to draw strengths from both approaches~\cite{ng2002discriminative}. We first define pragmatic conditional independence assumptions underlying the Bitcoin protocol, and formulate a generative model of the Bitcoin Blockchain block. In this context, we analyze the revealed entity behavior, both theoretically and from a data perspective. We then turn to network level modeling, present a discriminative model of transaction-transaction behavior, and analyze the associated medium-term categorical agent behavior.
\end{section}

\begin{section}{Probabilistic block model}~\label{sec:graph}
A Bitcoin transaction consists of a set of input addresses transferring BTC to a set of output addresses. More specifically, in the context of a transaction, each input address contributes a possibly fractional subset of its UTXOs to the creation of the set of UTXOs associated with output addresses, for the same total amount (minus a fee). Each UTXO is associated with an address, and each address is associated with a  logical agent, who may hold an arbitrary number of addresses, see Figure~\ref{fig:EntityGraph}.
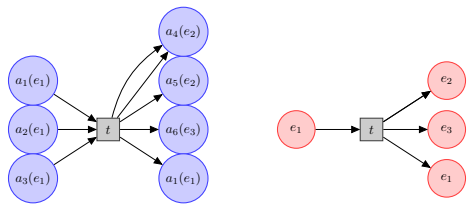
\begin{figure}[htb!]
\centering
\scalebox{0.50}{\begin{tikzpicture}[node distance=1.3cm,bend angle=20,auto]

  \tikzstyle{place}=[circle,thick,draw=blue!75,fill=blue!20,minimum size=10mm]
  \tikzstyle{redplace}=[circle,thick,draw=red!75,fill=red!20,minimum size=10mm]
  \tikzstyle{transition}=[rectangle,thick,draw=black!75,fill=black!20,minimum size=6mm]
        \begin{scope}
    \node [place] (n1) at (0,6.5) {$a_{2}(e_{1})$};
    \node [place] (n2) [above of=n1]  {$a_{1}(e_{1})$};
    \node [place] (n3) [below of=n1]  {$a_{3}(e_{1})$};
    
    \node [place] (n4) at (4,6.5) {$a_{6}(e_{3})$};
    \node [place] (n5) [above of=n4]  {$a_{5}(e_{2})$};
    \node [place] (n6) [above of=n5]  {$a_{4}(e_{2})$};
    \node [place] (n7) [below of=n4]  {$a_{1}(e_{1})$};
    
    \node [transition] (e1) at (2,6.5) {$t$};
      
    \draw[->]
        (n1) edge (e1) (e1) edge (n4) (n2) edge (e1) (n3) edge (e1) (e1) edge (n5) (e1) edge (n6) (e1) edge (n7);
    \draw[->] (e1) edge[bend left] (n6);
    
  \end{scope}
  
  \begin{scope}[xshift=7cm]
    \node [redplace] (n1') at (0,6.5) {$e_{1}$};
    
    \node [redplace] (n4') at (4,6.5) {$e_{3}$};
    \node [redplace] (n5') [above of=n4']  {$e_{2}$};
        \node [redplace] (n6') [below of=n4']  {$e_{1}$};
    
    \node [transition] (e1') at (2,6.5) {$t$};
      
    \draw[->]  (n1') edge (e1') (e1') edge (n4') (e1') edge (n5') (e1') edge (n5') (e1') edge (n6');
        
  \end{scope}
 
\end{tikzpicture}}
\caption{\textbf{Bitcoin address-transaction} bipartite graph data structure of visible addresses $a_{i}$, (associated with unknown entities $e_{i}$), (left) with each arrow corresponding to an \textit{unspent transaction output} (UTXO), and corresponding hidden entity-transaction graph (right). A block consists of many such transactions. A change address, here $a_{1}$, may be used to return the remainder of the UTXO.} 
\label{fig:EntityGraph}
\end{figure}

We embed the Bitcoin Blockchain transaction graph in a directed bipartite graph structure $\mathscr{G} = (\mathscr{A},\mathscr{T},\mathscr{E})$, with the following vertex and edge features:
\begin{itemize}
\item \textit{address vertex $a \in \mathscr{A}$}: number of UTXO $k_{a}^{UTXO}$, and out-degree $k_{a}^{out}$,
\item \textit{transaction vertex $t \in \mathscr{T}$}: transaction value $v$ and fee $f$, 
\item \textit{directed address-transaction edge $\in \mathscr{E}$}: outgoing value $v$ from address $a$ via transaction $t$,
\item \textit{directed transaction-address edge $\in \mathscr{E}$}: incoming value $v$ to address $a$ via transaction $t$.
\end{itemize}
Since the Bitcoin protocol specifies that transactions should be validated in blocks and the \textit{proof-of-work} consensus protocol incentivizes validators to agree on a single block-chain, we ignore transient disagreements and assume a discrete-time \textit{simple path} structure of blocks.

We propose a stationary graphical model~\cite{jordan2004} of a Bitcoin Blockchain block. First we develop a fully observable \textit{block-transaction, address} (BT-A) model, illustrated in Figure~\ref{fig:plate_notation_simple_bta_model}, that we then augment with entity attributes into a \textit{block-transaction, entity-address} (BT-EA) model with more complex structure.
\begin{figure}[!htb]
    \centering
    \scalebox{0.65}{\begin{tikzpicture}
    
    \node[latent]           (p_utxo_out)        {$p_{UTXO,out}$};%
    \node[latent, right=0.2cm of p_utxo_out]           (p_new_c)        {$p_{new}$};%
    \node[latent, right=0.2cm of p_new_c]           (lam_out_c)        {$\lambda_{out}$};%
    \node[latent, right=0.2cm of lam_out_c]           (lam_in_c)        {$\lambda_{in}$};%
    \node[latent, right=0.2cm of lam_in_c]           (p_utxo_in)        {$p_{UTXO,in}$};%
    
    \node[obs, above=2cm of p_utxo_out]           (U_o)        {$U_{o}$};%
    \node[obs, above=0.7cm of U_o]              (V_o)      {$V_{o,u}$};%
    \node[obs, left=0.3cm of V_o]           (A_o)        {$A_{o}$};%

    \node[obs, above=2cm of p_new_c]           (N_new)        {$N_{t}$};%
    \node[obs, above=2cm of lam_out_c]           (O_t)        {$O_{t}$};%
    \node[obs, above=2.1cm of lam_in_c]           (I_t)        {$I_{t}$};%
    
    \node[obs, right=0.5cm of V_o]           (V_t)        {$V_{t}$};%
    \node[obs, above=0.5cm of V_t]           (F_t)        {$F_{t}$};%
    \node[latent, above=0.5cm of F_t]           (p_fee)        {$\mu_{fee}$};%
    
    \node[obs, above=2cm of p_utxo_in]           (U_i)        {$U_{i}$};%
    \node[obs, above=0.8cm of U_i]              (V_i)      {$V_{i,u}$};%
    \node[obs, right=0.3cm of V_i]           (A_i)        {$A_{i}$};%
    
    \node[latent, right=1.2cm of p_utxo_in]           (lam_size)        {$\lambda_{size}$};%

    \node[obs, left=0.5cm of A_o]           (d_out_deg)        {$k^{out}$};%
    \node[obs, right=0.5cm of A_i]           (d_utxo_available)        {$k^{UTXO}$};%
    \node[obs, above=1cm of lam_size]           (T_b)        {$T_{b}$};%

    \draw[->] (d_out_deg) edge (A_o) %
              (d_utxo_available) edge (A_i) %
              (lam_size) edge (T_b) %
              (lam_in_c) edge (I_t) %
              (p_utxo_in) edge (U_i) %
              (p_utxo_out) edge (U_o) %
              (lam_out_c) edge (O_t) %
              (O_t) edge (N_new) %
              (p_new_c) edge (N_new) %
              (p_fee) edge (F_t) %
              (F_t) edge (V_t) %
              (V_t) edge (V_o) %
              (F_t) edge (V_o) %
              (A_i) edge (V_i) %
              (A_o) edge (V_o) %
              (V_i) edge (V_t)
              (N_new) edge (A_o);

    \plate {UTXO_out} { 
        (V_o)
    } {$U_{o}$}; %
    \plate {out} { 
     (UTXO_out) %
     (U_o)(A_o)
    } {$O_{t}$}; %

    \plate {UTXO_in} { 
        (V_i)
    } {$U_{i}$}; %
    \plate {in} { 
     (UTXO_in) %
     (U_i)(A_i)
    } {$I_{t}$}; %

    \plate {Transaction} { 
     (out) %
     (in) %
     (O_t)(N_new)(I_t)(F_t)(V_t) %
    } {$T_{b}+1$}; %
    
    \plate {Block} { 
     (Transaction) %
     (T_b)(d_out_deg)(d_utxo_available)%
    } {$ B$}; %
\end{tikzpicture}}
    \caption{\textbf{Block-transaction address} model, plate notation. Observed random variables are shaded while non-observed variables are plain.}
    \label{fig:plate_notation_simple_bta_model}
\end{figure}
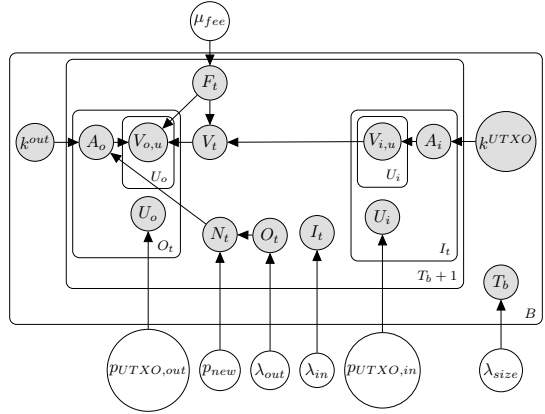
\begin{subsection}{Block-transaction address model}
A block $b$ is composed of the set of transactions $t$ validated by the peer node who solved the cryptographic challenge the fastest. With the approximation of stationary inter-block time, and assuming independence between the ability of solving the cryptographic challenge and the selection of transactions, we model the number $T_{b}$ of transactions per block as a Poisson distribution. Similarly assuming stationary and independent address usage, we model the number of input addresses $I_{t}$ and output addresses $O_{t}$ per transaction as a Poisson random variable.
\begin{definition}[Transactions, input and output addresses]
\begin{align}
& \forall b \in \mathscr{B}, \quad T_{b} \sim \mathscr{P}(\lambda_{size}) \nonumber \\
& \forall t \in \mathscr{T}, \quad I_{t} \sim \mathscr{P}_{n}(\lambda_{in}) \quad \text{and} \quad O_{t} \sim \mathscr{P}_{n}(\lambda_{out}),
\label{eq:transInOut}
\end{align}
where $\mathscr{P}_{n}$ is the normalized \textit{Poisson} distribution on $\mathbb{N}^{*}$.
\end{definition}
On the receiving end of a transaction, it is possible to generate a new address. In the Bitcoin pseudonymous context, this reduces the traceability of the full set of transactions associated with an entity. Considering the set of output addresses as a whole, we model the conditional distribution of the number of new addresses given the number of output addresses as a Binomial random variable.
\begin{definition}[New address distribution]
\begin{equation}
\forall t \in \mathscr{T}, \quad N_{t} | O_{t} \sim \mathscr{B}(O_{t}, p_{new}).
\label{eq:mew}
\end{equation}
\end{definition}
In the interest of a tractable inference procedure, and in the absence of an informative prior, in this work we focus our efforts on maximum-likelihood estimation, and assume uniform prior $\lambda_{size}, \lambda_{in}, \lambda_{out}, p_{new}$.

We now proceed to describe the generative model of the input and output addresses. A natural choice for the generative hierarchical model is the LDA or Dirichlet-multinomial model used in topic modeling~\cite{blei2003latent,mimno2012topic}. Here, given the full observability of the model variables and decomposability of the likelihood, motivated by topological social network analysis, we use the Albert-Barabasi \textit{preferential attachment} model~\cite{AlbertBarabasiScaling}, which can be seen as the posterior probability of an LDA model in the appropriate feature space. 

Specifically, we consider that the probability of the $i^{th}$ address $A_{i}$ to be a given address $a$ is proportional to the number of available UTXO of the address. The model reads as follows.
\begin{definition}[Input addresses]
$\forall i \in \{1, \dots, I_{t}\}$
\begin{equation}
\mathbb{P}(A_{i}=a | k^{UTXO}) = \frac{k_{a}^{UTXO}+1}{\sum_{a' \in \mathscr{A}} (k_{a'}^{UTXO}+1)}
\label{eq:in}
\end{equation}
where $\mathscr{A}$ is the set of available addresses, $k_{a}^{UTXO}$ is the number of unspent outputs of the address.
\end{definition}
The output address model is similar, except that the attachment model is now considered a function of the out-degree of the address, i.e. while the inclination of the address to be part of the inputs (i.e. to spend) is considered to be a function of the number of UTXOs it has still available, the inclination of the address to be part of the outputs (i.e. to accumulate) is considered to be a function of the number of distinct UTXOs it has already spent.
\begin{definition}[Output addresses]
$\forall o \in \{1, \dots, O_{t}\}$
\begin{align}
\mathbb{P}(A_{o}  =a | N_t, k^{out}) \sim & \mathbbm{1}(o \leq N_t; a = a_{0}) \label{eq:out}\\
& + \mathbbm{1} ( o > N_t) \frac{k_{a}^{out}+1}{\sum_{a' \in \mathscr{A}} (k_{a'}^{out}+1)}\nonumber 
\end{align}
where $a_{0}$ denotes a new address.
\end{definition}
For each input address, since empirically we observe that the $UTXO$ distribution is concentrated around $1$, we model the conditional distribution of the number $U_{i}$ of UTXOs used given the number of UTXOs available as a geometric random variable with uniform prior. We then draw the UTXOs uniformly from the available set.
\begin{definition}[Input UTXOs]
\begin{align}
\forall i \in \{1, \dots, I_{t}\}, \, & U_{i} | k_{A_{i}}^{UTXO} \sim \mathscr{G}_{[1,\dots,k_{A_{i}}^{UTXO}]}(p_{UTXO,in}) \nonumber \\
\forall u \in \{1, \dots, U_{i}\}, \, & V_{i,u} | U_{i} \sim \mathscr{U}_{\{1,\dots,k_{A_{i}}^{UTXO}\}}
\label{eq:utxoIn}
\end{align}
where $\mathscr{G}_{[1,\dots,k_{A_{i}}^{UTXO}]}$ is the normalized geometric distribution with support $[1,\dots,k_{A_{i}}^{UTXO}]$, and where $\mathscr{U}_{\{1,\dots,k_{A_{i}}^{UTXO}\}}$ is the uniform distribution over the set ${\{1,\dots,k_{A_{i}}^{UTXO}\}}$.
\end{definition}
We obtain the total transaction value $V_t$ as the sum of the input UTXOs.
\begin{definition}[Transaction value]
\begin{equation}
V_{t} | I_{t},U_{i}, V_{i,u} = \sum_{1 \leq i \leq I_{t}} \sum_{1 \leq u \leq U_{i}} V_{i,u}.
\label{eq:value}
\end{equation}
\end{definition}
A fee is paid to the miners to reward their validation work and higher fees may nudge their selection of transactions when creating blocks. We thus model the fee associated with a Bitcoin transaction as a normalized Gaussian distribution. The number of output UTXOs and their values is modeled similarly to the input UTXOs.
\begin{definition}[Fee value, output UTXOs]
\begin{align}
& \forall t \in \mathscr{T}, \quad F_{t} | V_{t} = \mathscr{N}_{[0,V_{t}]}(\mu_{fee},\sigma_{fee}) \nonumber \\
& \forall o \in \{1, \dots, O_{t}\}, \quad U_{o} \sim \mathscr{G}(p_{UTXO,out}) \nonumber \\
& \forall u \in \{1, \dots, U_{o}\}, \quad V_{o,u} | V_{t}, F_{t} \sim \mathscr{U}_{[1,\dots,V_{t}-F_{t}]}
\label{eq:utxoOut}
\end{align}
where $\mathscr{N}_{[a,b]}$ denotes the Gaussian distribution normalized over the interval $[a,b]$, and where $\mathscr{U}$ denotes the normalized uniform distribution (the $V_{o,u}$ are also normalized in order to sum to $V(t)-F(t)$).
\end{definition}
The resulting block-transaction address model~\eqref{eq:transInOut}-\eqref{eq:mew}-\eqref{eq:in}-\eqref{eq:out}-\eqref{eq:utxoIn}-\eqref{eq:value}-\eqref{eq:utxoOut} is presented in Figure~\ref{fig:plate_notation_simple_bta_model}. 

We now turn to a more complex variant of the proposed model meant to capture categorical behavior of the unobserved entities transacting on the Blockchain.
\end{subsection}

\begin{subsection}{Block-transaction entity-address model}
An entity $e$ is associated with a Bitcoin user and fully characterized by a set of addresses $A(e) = \{a^{(e)}_{i}\}_{i}$. In this section we extend the BT-A model to take into account categorical entity behavior. We assume that entities belong to different categories $c \in \mathscr{C}$, with potentially different behaviors.

We first model the fact that the hyper-parameters $\lambda_{in}$ and $\lambda_{out}$ associated with the number of input and output addresses, depend on the category $c$ of the associated entity, and are noted $\lambda_{in,c}$ and $\lambda_{out,c}$. Similarly the parameter associated with the number of new addresses in the output $p_{new,c}$, and the number of UTXO in the input $p_{UTXO,in,c}$ and output $p_{UTXO,out,c}$ are category-dependent.

Second we update the conditional independence structure of the generative model to reflect the fact that address selection~\eqref{eq:in}-\eqref{eq:out} is now also conditioned on entities.
\begin{definition}[Input and output entities and addresses]
\begin{align}
& \mathbb{P}(E_{t}=e | k^{UTXO}) = \frac{k_{e}^{UTXO}+1}{\sum_{e' \in \mathscr{E}} (k_{e'}^{UTXO}+1)} \nonumber\\
& \mathbb{P}(A_{i}=a | k^{UTXO}, E_{t}) = \frac{\mathbbm{1}(a\in A(E_{t})) \, (k_{a}^{UTXO}+1)}{\sum_{a' \in \mathscr{A}, a\in A(E_{t})} (k_{a'}^{UTXO}+1)} \nonumber \\
& \mathbb{P}(E_{o}=e | k^{out}) = \frac{k_{e}^{out}+1}{\sum_{e' \in \mathscr{A}} (k_{e'}^{out}+1)} \label{eq:entity}\\
& \mathbb{P}(A_{o}=a | k^{out}, E_{o}) = \mathbbm{1}(o \leq N_t; a = a_{0})\nonumber\\
& \quad \quad \quad \quad \quad + \mathbbm{1} ( o > N_t) \frac{\mathbbm{1}(a\in A(E_{o})) \, (k_{a}^{out}+1)}{\sum_{a' \in \mathscr{A}, a\in A(E_{o})} (k_{a'}^{out}+1)}.\nonumber
\end{align}
\end{definition}
This dependency structure intending to capture the behavior of distinct categories of entities is illustrated in Figure~\ref{fig:plate_notation_simple}.
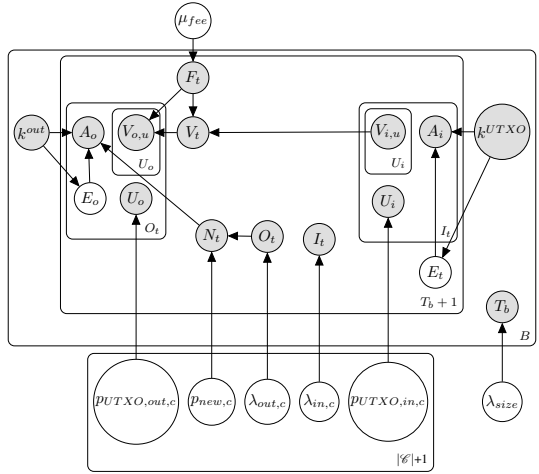
\begin{figure} [!htb]
    \centering
    
\scalebox{0.60}{\begin{tikzpicture}
    
    \node[latent]           (p_utxo_out)        {$p_{UTXO,out,c}$};%
    \node[latent, right=0.2cm of p_utxo_out]           (p_new_c)        {$p_{new,c}$};%
    \node[latent, right=0.2cm of p_new_c]           (lam_out_c)        {$\lambda_{out,c}$};%
    \node[latent, right=0.2cm of lam_out_c]           (lam_in_c)        {$\lambda_{in,c}$};%
    \node[latent, right=0.2cm of lam_in_c]           (p_utxo_in)        {$p_{UTXO,in,c}$};%

    \node[obs, above=3.2cm of p_utxo_out]           (U_o)        {$U_{o}$};%
    \node[obs, above=0.7cm of U_o]              (V_o)      {$V_{o,u}$};%
    \node[obs, left=0.3cm of V_o]           (A_o)        {$A_{o}$};%
    \node[latent, left=0.3cm of U_o]           (E_o)        {$E_{o}$};%
    
    \node[obs, above=2.8cm of lam_in_c]           (I_t)        {$I_{t}$};%
    \node[obs, above=2.8cm of lam_out_c]           (O_t)        {$O_{t}$};%
    \node[obs, above=2.8cm of p_new_c]           (N_new)        {$N_{t}$};%

    \node[obs, right=0.5cm of V_o]           (V_t)        {$V_{t}$};%
    \node[obs, above=0.5cm of V_t]           (F_t)        {$F_{t}$};%
    \node[latent, above=0.5cm of F_t]           (p_fee)        {$\mu_{fee}$};%
    
    \node[obs, above=3.2cm of p_utxo_in]           (U_i)        {$U_{i}$};%
    \node[obs, above=0.8cm of U_i]              (V_i)      {$V_{i,u}$};%
    \node[obs, right=0.3cm of V_i]           (A_i)        {$A_{i}$};%
    \node[latent, below=2.4cm of A_i]           (E_i)        {$E_{t}$};%

    \node[latent, right=1.2cm of p_utxo_in]           (lam_size)        {$\lambda_{size}$};%

    \node[obs, left=0.5cm of A_o]           (d_out_deg)        {$k^{out}$};%
    \node[obs, right=0.5cm of A_i]           (d_utxo_available)        {$k^{UTXO}$};%
    \node[obs, above=1.3cm of lam_size]           (T_b)        {$T_{b}$};%

    \draw[->] (d_out_deg) edge (A_o) %
              (d_out_deg) edge (E_o) %
              (d_utxo_available) edge (E_i) %
              (d_utxo_available) edge (A_i) %
              (lam_size) edge (T_b) %
              (lam_in_c) edge (I_t) %
              (E_o) edge (A_o) %
              (p_utxo_in) edge (U_i) %
              (p_utxo_out) edge (U_o) %
              (lam_out_c) edge (O_t) %
              (O_t) edge (N_new) %
              (p_new_c) edge (N_new) %
              (E_i) edge (A_i) %
              (p_fee) edge (F_t) %
              (F_t) edge (V_o) %
              (N_new) edge (A_o) %
              (F_t) edge (V_t) %
              (V_t) edge (V_o) %
              (V_i) edge (V_t);

    \plate{category}{
        (lam_in_c)(lam_out_c)(p_utxo_in)(p_utxo_out)(p_new_c)
    } {$|\mathscr{C}|$+1};
    
    \plate {UTXO_out} { 
        (V_o)
    } {$U_{o}$}; %
    \plate {out} { 
     (UTXO_out) %
     (U_o)(A_o)(E_o)
    } {$O_{t}$}; %

    \plate {UTXO_in} { 
        (V_i)
    } {$U_{i}$}; %
    \plate {in} { 
     (UTXO_in) %
     (U_i)(A_i)
    } {$I_{t}$}; %

    \plate {Transaction} { 
     (out) %
     (in) %
     (E_i)(O_t)(N_new)(I_t)(F_t)(V_t) %
    } {$T_{b}+1$}; %
    
    \plate {Block} { 
     (Transaction) %
     (T_b)(d_out_deg)(d_utxo_available)%
    } {$ B$}; %
\end{tikzpicture}}
    \caption{\textbf{Block-transaction entity-address} model, including category-specific variables, as well as hidden entities. As per the protocol, all input addresses are associated with only $1$ entity, while output addresses are generally associated with multiple entities.}
    \label{fig:plate_notation_simple}
\end{figure}
\end{subsection}
\begin{subsection}{Model inference}
We assume a known dependency structure and estimate the model parameters. Since the prior is decomposable over nodes, and since all variables are observed in the BT-A model, the MLE inference amounts to local computation over each node and its parents. 

Regarding the BT-EA model, while the hidden entity variables make the inference more complex in general, here we assume that a separate heuristic such as the multi-input heuristic~\cite{AutomaticBtcClustering} allows associating each address with an entity, hence the inference process over the labeled set reduces to the scalable process used for the BT-A model.
\end{subsection}
\end{section}
~~~
\begin{section}{Block-level privacy analysis}\label{sec:privacyAnalysis}
In this section we present an analysis of address re-use behavior in the context of the probabilistic model introduced in the previous section, as well as implications of these results for Bitcoin transaction anonymity.
\begin{subsection}{Attacker model}
We model an attacker, attempting to identify the full set of addresses $A(e)$ associated with an entity $e$. We assume that the attacker uses the standard multi-input heuristic~\cite{AutomaticBtcClustering}, which associates the full set of address inputs for each transaction to a single entity and applies transitive closure. From the perspective of the external attacker, the true set of addresses $A(e)$ of an entity $e$ is partitioned into $A_{e}$ aliases, a-priori seen as distinct entities;
\begin{equation*}
A(e) = \bigcup_{1 \leq i \leq A_{e}} A(e)_{i},
\end{equation*}
where $A(e)_{i}$ denotes the address set associated with alias $i$ of entity $e$. In this setting, when participating in a transaction $t$ on the input side, we consider that the targeted entity $e$ selects $\{  N_{in,i}  \}_{1 \leq i \leq A_{e} }$ addresses from its available set following a generic multinomial distribution with parameters $\{p_{i}\}_{1 \leq i \leq A_{e} }$, which includes the special case for which the alias distribution is a linear function of $k^{UTXO}$. 

This models the typical Bitcoin user who, while being concerned by his privacy, is not particularly careful about  address selection, and uses multiple distinct aliases with distinct address sets, but sometimes mixes these address in the same transaction input, leading to a privacy collapse. 

Given the multi-input heuristic, it is indeed sufficient for an attacker to observe two addresses from distinct aliases and to associate these two aliases to the same entity using the multi-input heuristic. Formally, upon observing the input addresses from a transaction $t$ associated with input entity $e$, the attacker is able to associate the following address set with entity $e$:

{\small{
\begin{equation*}
\{a \in A(e)_{i}\cup A(e)_{j} | \mathbbm{1}(N_{in,i} > 0; \, N_{in,j} > 0), 1 \leq i,j \leq A_{e} \}.
\end{equation*}}}
\end{subsection}
\begin{subsection}{Privacy analysis}
In the following for simplicity we consider a one-step iteration and assume that the attacker is only aware of the set of addresses associated with alias $A(e)_{1}$. In this sense the control parameter $p_{1}$ plays the role of 1-$p_{new}$ from the BT-EA model. We analyze the number $D_{e,t}$ of addresses from entity $e$ that the attacker is able to discover after seeing the addresses involved in $1$ transaction, expressed as:
\begin{equation*}
D_{e,t} = \sum_{i = 2}^{A_{e}} |A(e)_{i}| \mathbbm{1}(N_{in,i} > 0; \, N_{in,1} > 0).
\end{equation*}
We can express the number of discovered addresses $D_{e,t}$ as a function of the alias addresses selection probabilities $p_{i}$.
\begin{prop}[Privacy loss from address re-use]

{\small{
\begin{equation}
\mathbb{E}[D_{e,t}] = \frac{1 - \exp{(-\lambda_{in,c}p_{1})}}{1 - \exp{(-\lambda_{in,c})}}  \sum_{i = 2}^{A_{e}} |A(e)_{i}| (1 - \exp{(-\lambda_{in,c}p_{i})})
\label{eq:prop}
\end{equation}}}
\end{prop}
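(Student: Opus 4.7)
The plan is to reduce the claim to a computation of the joint zero-probability $\mathbb{P}(N_{in,1}>0,\,N_{in,i}>0)$ via linearity of expectation, then expand through inclusion-exclusion, condition on $I_t$, and average using the moment generating function of the zero-truncated Poisson.

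First I would apply linearity of expectation to the definition of $D_{e,t}$, so that
\[
\mathbb{E}[D_{e,t}] = \sum_{i=2}^{A_{e}} |A(e)_{i}|\,\mathbb{P}(N_{in,1}>0,\,N_{in,i}>0),
\]
and the entire task reduces to evaluating $\mathbb{P}(N_{in,1}>0,\,N_{in,i}>0)$. Rewriting this via inclusion-exclusion on the complement, I would express it as $1 - \mathbb{P}(N_{in,1}=0) - \mathbb{P}(N_{in,i}=0) + \mathbb{P}(N_{in,1}=0,\,N_{in,i}=0)$, which isolates three joint-zero probabilities that are easy to compute marginally.

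Next I would condition on $I_t = n$. By the attacker model, $(N_{in,1},\dots,N_{in,A_e}) \mid I_t=n$ is multinomial with parameters $(n; p_1,\dots,p_{A_e})$, so the three required probabilities are $(1-p_1)^n$, $(1-p_i)^n$, and $(1-p_1-p_i)^n$ respectively. I then take expectation over $I_t \sim \mathscr{P}_n(\lambda_{in,c})$, the Poisson truncated to $\mathbb{N}^*$, whose pmf is $\lambda^n e^{-\lambda}/(n!(1-e^{-\lambda}))$ with $\lambda=\lambda_{in,c}$. The key subroutine is the identity
\[
\mathbb{E}[(1-q)^{I_t}] = \frac{e^{-q\lambda} - e^{-\lambda}}{1 - e^{-\lambda}}, \qquad q\in[0,1],
\]
obtained by recognizing the truncated exponential series for $e^{(1-q)\lambda}-1$.

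Applying this to $q\in\{p_1,p_i,p_1+p_i\}$ and substituting into the inclusion-exclusion formula gives a single fraction with denominator $1-e^{-\lambda}$. The pleasant algebraic step is that the numerator telescopes to $1 - e^{-p_1\lambda} - e^{-p_i\lambda} + e^{-(p_1+p_i)\lambda} = (1-e^{-p_1\lambda})(1-e^{-p_i\lambda})$, factoring out the $i$-independent term $(1-e^{-p_1\lambda})/(1-e^{-\lambda})$ exactly as stated in \eqref{eq:prop}. The only minor obstacle is bookkeeping the normalization constant of the truncated Poisson so that the $-e^{-\lambda}$ terms cancel cleanly; once that factoring is spotted, summing against $|A(e)_i|$ yields the claim.
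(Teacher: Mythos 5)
Your proposal is correct and follows essentially the same route as the paper: linearity of expectation, conditioning on $I_t$, the multinomial law of $(N_{in,1},\dots,N_{in,A_e})$ given $I_t=n$, and averaging against the zero-truncated Poisson (which the paper mislabels as ``binomial'' but whose pmf $\lambda_{in,c}^{n}/(n!\,(e^{\lambda_{in,c}}-1))$ it uses, matching yours). Your inclusion-exclusion step together with the identity $\mathbb{E}[(1-q)^{I_t}]=(e^{-q\lambda}-e^{-\lambda})/(1-e^{-\lambda})$ is simply a fully explicit version of the final simplification the paper only asserts, and your factorization $1-e^{-p_{1}\lambda}-e^{-p_{i}\lambda}+e^{-(p_{1}+p_{i})\lambda}=(1-e^{-p_{1}\lambda})(1-e^{-p_{i}\lambda})$ checks out and yields equation~\eqref{eq:prop} exactly.
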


\begin{proof}
By definition of $D_e,t$, we have
\begin{equation*}
   \mathbb{E}[D_{e,t}] = \sum_{i = 2}^{A_{e}} |A(e)_{i}| \mathbb{P}(N_{in,i} > 0 \land N_{in,1} > 0 | E_{t} = e).
\end{equation*}
Let $B$ be the second factor in the summation term, by marginalizing over $I_{t}$ and using the chain rule, we can write:
\begin{align*}
B & = \mathbb{P}(N_{in,i} > 0 \land N_{in,1} > 0 | E_{t} = e)\\
& = \sum_{n \geq 1} \mathbb{P}(N_{in,i} > 0 \land N_{in,1} > 0 | E_{t} = e, I_{t} = n) \\
& \quad \quad \quad \quad \quad \quad \quad \quad \quad \quad \quad \quad \quad \quad \quad \quad \mathbb{P}(I_{t} = n | E_{t} = e).
\end{align*}
Letting $C$ denote the first factor in the summation term above, we have:

{\small{
\begin{align*}
C & = \mathbb{P}(N_{in,i} > 0 \land N_{in,1} > 0 | E_{t} = e, I_{t} = n)\\
& = \sum_{n_{i} > 0, n_{1} > 0, \sum_{j = 1}^{A_{e}} n_{j} = n} \mathbb{P}( \{ N_{in,j} = n_{j}\}_{j} | E_{t}=e, I_{t}=n)\\
& = \sum_{n_{i} > 0, n_{1} > 0, \sum_{j = 1}^{A_{e}} n_{j} = n}  \frac{n!}{\prod_{j=1}^{A_{e}}n_{j}!}  \prod_{j=1}^{A_{e}} p_{j}^{n_{j}}
\end{align*}}}

where the last equality is obtained by definition of the multinomial distribution. Similarly since the number of input addresses $I_{t}$ follows a binomial distribution we have:
\begin{equation*}
\mathbb{P}(I_{t} = n | E_{t} = e) = \frac{\lambda_{in,c}^{n}}{n!(\exp{(\lambda_{in,c})}-1)},
\end{equation*}
and combining this expression with the expression of $C$, we can simplify the expression of $B$ to finally obtain equation~\eqref{eq:prop}, which concludes the proof.
\end{proof}
With $p_{1}$ as the control parameter, the expression states that the attacker information gain   is an exponential function of the probability of using addresses already identified (i.e. address re-use). The asymptotic behavior of a privacy-conscious user is described next.

\begin{prop}[Privacy-conscious asymptotics]
If $p_{1} \ll p{i}$ we have:
\begin{equation*}
\mathbb{E}[D_{e,t}]  \sim \lambda_{in} \, p_{1} ( |A(e)| - |A(e)_{1}| ).
\end{equation*}
\end{prop}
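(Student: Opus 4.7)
The plan is to start from the explicit formula
\[
\mathbb{E}[D_{e,t}] = \frac{1 - \exp(-\lambda_{in,c} p_1)}{1 - \exp(-\lambda_{in,c})} \sum_{i = 2}^{A_e} |A(e)_i| \bigl(1 - \exp(-\lambda_{in,c} p_i)\bigr)
\]
established in the previous proposition, and to perform two elementary asymptotic reductions on the two factors.

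First, since the $p_i$ sum to $1$, the hypothesis $p_1 \ll p_i$ forces $p_1 \to 0$, so I apply the first-order Taylor expansion $1 - \exp(-x) = x + O(x^2)$ to the numerator of the prefactor, giving
\[
\frac{1 - \exp(-\lambda_{in,c} p_1)}{1 - \exp(-\lambda_{in,c})} = \frac{\lambda_{in,c}\, p_1}{1 - \exp(-\lambda_{in,c})} + O(p_1^2).
\]
Second, in the privacy-conscious regime where each non-primary alias carries non-vanishing mass (so $\lambda_{in,c} p_i$ stays of order $\lambda_{in,c}$), one has $1 - \exp(-\lambda_{in,c} p_i) \sim 1 - \exp(-\lambda_{in,c})$ uniformly in $i \geq 2$. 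This is immediate in the two-alias case ($A_e = 2$, where $p_2 = 1 - p_1 \to 1$) and holds up to a $1 + o(1)$ factor in general provided $\lambda_{in,c}$ is large enough that both quantities are close to $1$. Thus
\[
\sum_{i = 2}^{A_e} |A(e)_i|\bigl(1 - \exp(-\lambda_{in,c} p_i)\bigr) \sim \bigl(1 - \exp(-\lambda_{in,c})\bigr) \sum_{i = 2}^{A_e} |A(e)_i|.
\]

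Multiplying the two estimates, the $(1 - \exp(-\lambda_{in,c}))$ terms cancel, and by disjointness of the alias partition $\sum_{i=2}^{A_e} |A(e)_i| = |A(e)| - |A(e)_1|$, yielding the claimed equivalent (with the inessential category subscript on $\lambda_{in,c}$ dropped, as is standard in such limits).

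The main obstacle is that the hypothesis $p_1 \ll p_i$ alone does not fully determine the regime in which the second reduction is sharp: if every $\lambda_{in,c} p_i$ were also small, the prefactor cancellation would instead leave a residual $\sum_i |A(e)_i| p_i$ rather than $\sum_i |A(e)_i|$. The cleanest way to justify the stated asymptotic is therefore to implicitly work in the regime where $\lambda_{in,c}$ is not small, which is the natural setting for a privacy-conscious user whose alternative aliases are each used frequently compared to alias $1$.
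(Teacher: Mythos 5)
Your derivation is correct and is essentially the argument the paper intends: the paper states this proposition without any proof, and the only natural route is exactly your two reductions of the exact formula~\eqref{eq:prop} --- linearizing $1-\exp(-\lambda_{in,c}\,p_{1})\approx \lambda_{in,c}\,p_{1}$ as $p_{1}\to 0$, replacing each ratio $\bigl(1-\exp(-\lambda_{in,c}\,p_{i})\bigr)/\bigl(1-\exp(-\lambda_{in,c})\bigr)$ by $1$, and using the partition identity $\sum_{i=2}^{A_{e}}|A(e)_{i}| = |A(e)|-|A(e)_{1}|$. Your closing caveat is also well taken: under the literal hypothesis $p_{1}\ll p_{i}$ alone the stated equivalent can fail (e.g.\ many aliases with all $\lambda_{in,c}\,p_{i}$ small leaves the residual $\lambda_{in,c}\sum_{i\geq 2}|A(e)_{i}|\,p_{i}$ instead of $\sum_{i\geq 2}|A(e)_{i}|$), so the extra regime assumption you impose --- $\lambda_{in,c}\,p_{i}$ not small for $i\geq 2$, automatic in the two-alias case and consistent with the calibrated values of $\lambda_{in}$ in the experiments --- is genuinely needed, and it reflects an imprecision in the proposition's hypothesis rather than a gap in your proof.
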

This result shows that the one-step information gain from the attacker is a linear function of the probability of using already-used addresses, and also linear in the number of addresses typically used as input. This result at the transaction level can be readily extended to a chain-length estimate by accounting for the probability of an entity to transact, as provided explicitly in equation~\eqref{eq:entity} of the BT-EA model. We also highlight that while a low $p_{1}$ models a privacy-conscious user, the user strategy is non-adaptive, in the sense that the user does not try to adjust his strategy based on the attacker strategy.
\end{subsection}
\end{section}
\begin{section}{Probabilistic transaction graph model}
We now consider the behavior of entities across transactions, and assume that entity categories exhibit different behaviors. Given the lack of a-priori underlying modeling structure to this behavior, and given the combinatorial nature of such behavior, we propose a discriminative framework in which model selection can be carried out more efficiently based on a possibly large set of relevant features. We rely on the classical multi-input heuristic~\cite{AutomaticBtcClustering} for defining entities, and formulate a decision-tree based classification problem in the following feature space.

\begin{subsection}{Feature space}
We consider the following five feature classes, and for continuous features explicitly consider the feature mean and standard deviation; address features, entity features, temporal features, graph centrality metric features, motif features.

Address-specific features include attributes such as the total BTC received, the total BTC balance, the number of input/output transactions, etc. Analogous features are defined at the entity level as well as the number and proportion of Coinbase transactions (indicative of BTC creations).

Temporal features are those such as the number of  weeks, months, years of activity. the number of entity traded with per week, month, year, the number of receiving/sending/receiving  sending days, the activity period duration, and the active day ratio.

Motif features are presented in Figure~\ref{fig:2motiffeature}. Here we consider $1$, $2$, and $3$ motifs, extending the $2$-motifs from~\cite{ExchangeAddressClustering}. Motifs are a comprehensive description of the transaction-to-transaction properties.
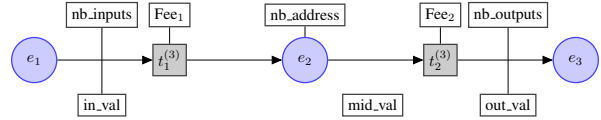
\begin{figure}[!htb]
    \centering
    \scalebox{0.60}{\begin{tikzpicture}[node distance=2cm,bend angle=30,auto]

    \tikzstyle{place}=[circle,thick,draw=blue!75,fill=blue!20,minimum size=10mm]
    \tikzstyle{transition}=[rectangle,thick,draw=black!75,fill=black!20,minimum size=4mm]
    \tikzstyle{label}=[rectangle,thick,draw=black!75,minimum size=4mm]
        \begin{scope}    
    \node [place] (n1) at (0,0) {$e_{1}$};
    \node [place] (n2) at (6,0) {$e_{2}$};
    \node [place] (n3) at (12,0) {$e_{3}$};
    \node [transition] (e1) at (3,0) {$t^{(3)}_{1}$};
    \node [transition] (e2) at (9,0) {$t^{(3)}_{2}$};
    
    \node[label] (l1) at (1.5,1) {nb\_inputs};
    \node[label] (l2) at (10.5,1) {nb\_outputs};
    \node[label] (l3) at (1.5,-1) {in\_val};
    \node[label] (l4) at (10.5,-1) {out\_val};
    \node[label] (l5) at (6,1) {nb\_address};
    
    \node[label] (l6) at (3,1) {Fee$_{1}$};
    \node[label] (l7) at (9,1) {Fee$_{2}$};

    \node[label] (l8) at (7.5,-1) {mid\_val};
    
    \draw[-] (l1) edge (l3) (l2) edge (l4)  (l6) edge (e1)  (l7) edge (e2) (l5) edge (n2);
      
    \draw[->] (n1) edge (e1) (e1) edge (n2) (n2) edge (e2) (e2) edge (n3) ;
    
        \end{scope}
 
    \end{tikzpicture}}
    \caption{\textbf{2-motif features} (rectangular white boxes) annotated over a 2-motif. A N-motif is a path of length $2N$ on the bipartite entity-transaction graph. We distinguish Direct motifs from Loop motifs, the latter indicating that an entity is transacting with itself using distinct addresses.}
    \label{fig:2motiffeature}
\end{figure}
\end{subsection}
\end{section}

\begin{section}{Experiments}\label{sec:numRes}

In this section we present numerical results of our probabilistic Bitcoin Blockchain model. We first describe the training procedure for the generative block model and discuss obtained model parameters. We then turn to the transaction-to-transaction discriminative model results and analyze the properties revealed by the joint analysis.

\begin{subsection}{Dataset}
We consider the set of blocks of height inferior or equal to $514.971$, corresponding to blocks created before March 24th 2018, 15:19:02, which contains about $500.000.000$ addresses. Address labels, revealing entity identifiers, are obtained from WalletExplorer \url{https://www.walletexplorer.com/}. The set of address entity label pairs used has been made available at \url{https://github.com/Maru92/EntityAddressBitcoin}.

We interact with the Blockchain via the BlockSci toolbox v.0.4.5 released on March 16th 2018~\cite{BlockSci}, on a 64 GB machine. The final labeled dataset used in numerical experiments consists of $28.353.493$ addresses, associated with $|\mathscr{E}_{known}|=260$ entities representing $4$ entity categories in the following proportions:
\begin{itemize}
    \item \textit{Exchange} (E): 108 entities, 7.892.587 addresses,
    \item \textit{Service} (S): 68 entities, 17.606.608 addresses,
    \item \textit{Gambling} (G): 65 entities, 2.775.810 addresses,
    \item \textit{Mining Pool} (M): 19 entities, 78.488 addresses.
\end{itemize}
When training the probabilistic model, we restrict ourself to the period from January 1st 2016 to March 16th 2018, where overall patterns are relatively stationary. Indeed since the proposed model is static we do not attempt to study its ability to model transient regimes. We observe $UTXO$ statistics in Table \ref{table:utxoStats} and $UTXO$ distribution in Figure~\ref{fig:utxoScales}, showing wide variability across multiple scales.
\begin{table}[!htb]
\centering
\begin{tabular}{cccccc} 
 \toprule
Quantity & E & S & G & M & All\\
 \midrule
mean $\mu(V_{u,o})$ & 8.62 & 0.53 & 0.11 & 1.27 & 4.39 \\
std $\sigma(V_{u,o})$ & 93.1 & 41.6 & 0.81 & 4.25 & 70.0 \\
 \bottomrule
\end{tabular}
\caption{\textbf{UTXO empirical statistics in BTC:} the UTXO output values have a large standard deviation compared to their mean, and vary significantly across entity categories.}
\label{table:utxoStats}
\end{table}
\begin{figure}[!htb]
    \centering
    \includegraphics[scale=0.5]{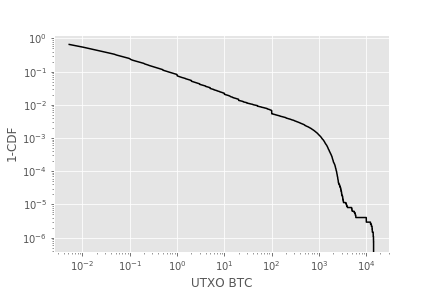}
    \caption{\textbf{BTC UTXO distribution :} $1$ minus the cdf of the UTXO represented in log log coordinates, with $99.9\%$ of the distribution qualitatively following a power law on the interval $[10^{-3},10^3]$.}
    \label{fig:utxoScales}
\end{figure}
\end{subsection}
\begin{subsection}{Transaction subset modeling}
Since we consider a subset of the transaction graph, we  need to model transactions originating from our subset and directed outside it, or vice-versa. We follow the proposed model structure and model the number of external output addresses as a Poisson distribution $\mathscr{P}(\lambda_{sub})$. Transactions from unknown addresses towards known input addresses are modeled with no known input and a number of transactions per block $T_{b,incoming}$ following a Poisson distribution $\mathscr{P}(\lambda_{size,sub})$. Coinbase transactions are created in a similar manner: no inputs, number of addresses in the outputs drawn following a Poisson distribution of parameter $\lambda_{out,sub}$, with new addresses, $p_{new,sub}$, and several UTXOs created per addresses, $p_{UTXO,out,sub}$.
\end{subsection}

\begin{subsection}{Block model training}

We train the model using data from the period January 1st 2016 to March 16th 2018 consisting of about $10$ million addresses. We first verify the main independence assumption, between the number of input addresses and the number of output addresses. Since $\rho_{pearson}(I_{t},O_{t}) = 0.015$, we consider the marginal independence hypothesis validated.

The inference produces a value $\lambda_{size}=65.6$ for both models. In Table~\ref{table:genmodelparams} we present the model parameter results from the model training for the BT-A and BT-EA models.
\begin{table}[!htb]
\centering
\begin{tabular}{cccccc} 
 \toprule
 Parameter& \multicolumn{4}{c}{BT-EA}&\multicolumn{1}{c}{BT-A}\\
 \cline{2-6}
& E & S & G & M & All\\
 \midrule
 $P(E_{t}=e)$ & 0.33 & 0.55 & 0.09 & 0.03 & 1 \\
 $\lambda_{in}$ & 3.79  &  2.58  &  1.98  &  21.2 & 2.99\\
 $\lambda_{out}$ & 0.68  &  1.96  &  0.21  &  7.04 & 1.21\\
 $p_{UTXO,in}$ & 0.95  &  0.92  &  0.84  &  0.67 & 0.92\\
 $p_{UTXO,out}$ & 1.00  &  1.00  &  1.00  &  1.00 & 1.00\\
 $p_{new}$ & 0.23  &  0.20  &  0.47  &  0.55 & 0.26\\
 \bottomrule
\end{tabular}
\caption{\textbf{Model parameters} from calibration on the period from January 1st 2016 to March 16th 2018, for the Exchange, Services, Gambling, Mining Pool categories.}
\label{table:genmodelparams}
\end{table}

The results reflect the idiosyncratic properties of Bitcoin Blockchain transactions, with for instance the need to gather UTXOs from various addresses, which is illustrated by the fact that $\lambda_{in}>\lambda_{out}$. It is also clear from the UTXO parameters that the input parameters are  more discriminative than the output parameters, which reflect transfers from other parties from the perspective of the entity concerned.

Lastly we observe significant address generation distinctions across entity categories, with Gambling and Mining Pools seemingly more privacy-conscious given their higher probability of generating new addresses. They also transact less frequently, using more input addresses. Detailed impact of entity behavior on privacy properties is analyzed subsequently.
\end{subsection}
\begin{subsection}{Block model testing}
In order to assess the model performance, we now evaluate out-of-sample model accuracy. Starting from scratch, we train the model on $4911$ blocks corresponding to the period from January 1st 2017 to January 31st, 2017, and evaluate the model on $2150$ blocks associated with the period from February 1st, 2017, to February 14th, 2017.
\begin{table}[!htb]
\centering
\begin{tabular}{cccccc} 
 \toprule
 Metric& \multicolumn{4}{c}{BT-EA}&\multicolumn{1}{c}{BT-A}\\
 \cline{2-6}
& E & S & G & M & All\\
 \midrule
MSE & 1.22 &  -0.30 &  -0.02 &  0.06 & 1.12 \\
RMSE & 125 &  53.3 &  1.15 &  5.19 & 90.5 \\
MAE & 15.6 &  0.94 &  0.20 &  2.42 & 7.47 \\
RMAE & 1.82 &  1.74 &  1.86 &  1.93 & 1.69 \\
NRMSE & 1.34 &  1.28 &  1.42 &  1.22 & 1.29 \\
 \bottomrule
\end{tabular}
\caption{\textbf{Error statistics in BTC for UTXO output values $V_{u,o}$:} from the BT-A level overall value, as well as per category from the BT-EA model, for the Mean Signed Error (MSE), Mean Absolute Error (MAE), Root Mean Squared Error (RMSE), Relative Mean Absolute Error (RMAE) and Normalized Root Mean Squared Error (N-RMSE) expressed as RMSE divided by $\sigma(V_{u,o})$.}
\label{table:maeResults}
\end{table}

The results from Table~\ref{table:maeResults} illustrate that given the multi-scale nature of the underlying distributions, the model estimates are relatively close on average, i.e. well within an order of magnitude. Furthermore, the BT-EA model significantly reduces the bias (MSE) as well as the variance (RMSE) for most categories. The Exchange category is the only one for which both bias and variance increase, suggesting a fundamental modeling limitation.

The error terms are relatively large in absolute terms for both models, which is largely explained by the inherent variance in the data, both at the population level and at the class level. Indeed, the bias is low and most of the data variance is explained, with a N-RMSE ranging between $1.22$ and $1.34$.
\end{subsection}
\begin{subsection}{Privacy analysis validation}
Given the calibrated model parameters, we now validate experimentally the theoretical privacy properties of Bitcoin Blockchain transactions expressed by equation~\eqref{eq:prop}. We leverage the generative model and attacker model described above to simulate transaction traces and evaluate the proportion of the addresses that are re-identified for distinct categories, as a function of the number of transactions.
\begin{figure}[!htb]
    \centering
    \includegraphics[scale=0.5]{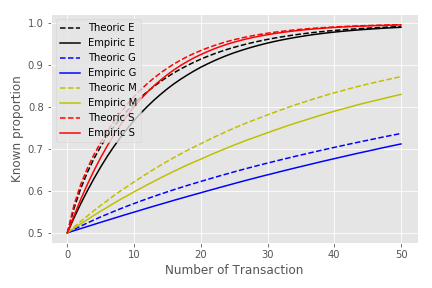}
    \caption{\textbf{Proportion of identified addresses:} by category, as a function of the number of transactions.}
    \label{fig:graphDeAnonymization}
\end{figure}

Figure~\ref{fig:graphDeAnonymization} shows agreement between the analytics results and the simulation of the block model. The figure also illustrates that Exchanges and Services typically are less privacy-conscious (lower probability of generating new addresses, frequent transactions), and hence for an equivalent number of Blockchain transactions, typically reveal a greater proportion of their address set.

Transaction anonymity however depends also on the transaction-to-transaction behavior. Indeed, it is conceivable that certain entities, while not following best block-level practices on address re-use, hence easily identifiable as entities, could be transacting in a way that little information is gathered from their network level transaction structure. In order to assess the latter, we now turn to the numerical results of our proposed network transaction model.
\end{subsection}


\begin{subsection}{Transaction network model}
We use the Python LightGBM implementation of the gradient boosted decision tree model~\cite{NIPS2017_6907} with a $70/30$ training$/$test partition of our dataset. A Gaussian Process (GP)-based optimization procedure for hyper-parameter optimization is implemented using the Python \textit{skopt} library \url{https://scikit-optimize.github.io/} with initial parameter values obtained from a coarse random search. The learning rate  hyper-parameter is optimized over the interval $[0.01,0.5]$ with early stopping after having done a random search over $[0,2]$; the resulting value is  $0.18$. The GP procedure is used with 50 iterations. 

We  make use in total of 10 address features, 8 entity features, 16 temporal features, 42 centrality features, 44 1-motif features, 81 2-motif features, and 114 3-motif features. We present in Table~\ref{table:optimized_gp} the F1, Accuracy and Precision results over the entire dataset and for each category.
\begin{table}[!htb]
\centering
\begin{tabular}{cccc} 
 \toprule
 Category & Accuracy & $F_{1}$ & Precision\\ [0.5ex] 
 \midrule
  Exchange & 0.94  & 0.92   &  0.91 \\
  Gambling & 0.95  &  0.97  &  1.00 \\
  Mining & 0.50  & 0.67   & 1.00  \\
  Service & 0.95  & 0.88   & 0.83  \\
  Overall & 0.92 & 0.91 & 0.92\\
 \bottomrule
\end{tabular}
\caption{\textbf{Classification performance:} for the $4$ entity categories considered, and overall.}
\label{table:optimized_gp}
\end{table}

The results illustrate that the model is able to very well capture the behavior of most entity categories. Furthermore, the network-level privacy analysis confirms the prior block-level analysis, with Mining Pools being the most privacy-conscious. Indeed, considering the most relevant features of the LightGBM model, in a 1 vs. all setting, it appears that for most categories except the Mining Pool, motif features are the most informative, indicating that the LightGBM model is not able to leverage the transaction sub-graph for identification of the Mining Pool category.
\end{subsection}

\end{section}

\begin{section}{Related work}\label{sec:relatedWork}

Heuristics for clustering multiple addresses to an entity have been studied in~\cite{7796940} and consistent address re-use patterns have been shown in~\cite{7816867}. The authors of~\cite{ponzi} attempt to detect Ponzi schemes.

Analysis of the Bitcoin protocol in the context of attacks have been proposed, for instance inference of peer-to-peer communication structure, in~\cite{NIPS2017_6735}, statistical analysis of bloom filters in~\cite{nick2015data}, and analysis of Bitcoin minting patterns in~\cite{mcginn2018toward} with application to de-anonymization. Flow-based address-transaction graph studies can  be found in ~\cite{meiklejohn2013fistful,huang2018tracking,paquet2018ransomware}. The obfuscation of Bitcoin transactions traceability has  been considered in~\cite{priceofanonymity}.

Several studies have applied discriminative models to the problem of de-anonymizing Bitcoin transactions, with for instance the use of transaction-specific features in~\cite{multiClass}, able to achieve $70\%$ accuracy for classifying entities into several types. In ~\cite{ExchangeAddressClustering}, the authors introduce transactions paths with application to the detection of Bitcoin exchanges, and achieve greater than $80\%$ accuracy. Similar transactions paths features are used in~\cite{Jourdan2018} for a 5-class classification problem with above $90\%$
 accuracy results.

\end{section}


\begin{section}{Conclusion}\label{sec:conc}

In this work, we proposed a  probabilistic model of the Bitcoin Blockchain which accounts for the complex Bitcoin protocol features. The model consists of a hierarchical structure from unspent transaction output (UTXO), to address, transaction, and block. We take into account entity modeling, including features relevant for  robustness to de-anonymization attacks, namely address re-use patterns. We also propose a discriminative model of transaction-to-transaction behavior and show its effectiveness in practice.

We analyzed the accuracy of the generative model using a large Bitcoin dataset of more than $10$ million address vertices, discussed the significant block-level heterogeneity of the model parameters across entity categories, and provide a complementary analysis of transaction-to-transaction behavior using the discriminative model. We consider in particular the de-anonymization properties of certain behaviors, which is one of the main focus areas of Bitcoin studies.

Extensions of this work may include the design of more complex graphical models including latent variables for modeling  transaction intent, and shared side-information across entities, inducing multivariate preferential attachment. A significant challenge for such models with more complex dependency structure and hidden variables is the design of a tractable training and inference procedure given the large-scale nature of such public cryptocurrency transaction graphs.
 \end{section}
\newpage
\bibliography{mybib}

\begin{thebibliography}{10}

\bibitem{GraphPrimerBlockchain}
Cuneyt~Gurcan Akcora, Yulia~R. Gel, and Murat Kantarcioglu.
\newblock Blockchain: {A} graph primer.
\newblock {\em CoRR}, abs/1708.08749, 2017.

\bibitem{AlbertBarabasiScaling}
Albert-L{\'a}szl{\'o} Barab{\'a}si and R{\'e}ka Albert.
\newblock Emergence of scaling in random networks.
\newblock {\em Science}, 286(5439):509--512, 1999.

\bibitem{ponzi}
Massimo Bartoletti, Barbara Pes, and Sergio Serusi.
\newblock Data mining for detecting bitcoin ponzi schemes.
\newblock {\em CoRR}, abs/1803.00646, 2018.

\bibitem{bhattacharya2006latent}
Indrajit Bhattacharya and Lise Getoor.
\newblock A latent dirichlet model for unsupervised entity resolution.
\newblock In {\em Proceedings of the 2006 SIAM International Conference on Data
  Mining}, pages 47--58. SIAM, 2006.

\bibitem{blei2003latent}
David~M Blei, Andrew~Y Ng, and Michael~I Jordan.
\newblock Latent dirichlet allocation.
\newblock {\em Journal of Machine Learning Research}, 3:993--1022, 2003.

\bibitem{dal2015credit}
Andrea Dal~Pozzolo, Giacomo Boracchi, Olivier Caelen, Cesare Alippi, and
  Gianluca Bontempi.
\newblock Credit card fraud detection and concept-drift adaptation with delayed
  supervised information.
\newblock In {\em International Joint Conference on Neural Networks (IJCNN)},
  pages 1--8. IEEE, 2015.

\bibitem{AutomaticBtcClustering}
Dmitry Ermilov, Maxim Panov, and Yury Yanovich.
\newblock Automatic bitcoin address clustering.
\newblock In {\em 2017 16th IEEE International Conference on Machine Learning
  and Applications (ICMLA)}, pages 461--466. IEEE, 2017.

\bibitem{NIPS2017_6735}
Giulia Fanti and Pramod Viswanath.
\newblock Deanonymization in the bitcoin {P2P} network.
\newblock In {\em Advances in Neural Information Processing Systems (NIPS)},
  pages 1364--1373, 2017.

\bibitem{MoneyLaundering}
Y.J. Fanusie and T.~Robinson.
\newblock Bitcoin laundering: An analysis of illicit flows into digital
  currency services., 2018.

\bibitem{handcock2010modeling}
Mark~S Handcock and Krista~J Gile.
\newblock Modeling social networks from sampled data.
\newblock {\em The Annals of Applied Statistics}, 4(1):5, 2010.

\bibitem{7816867}
M.~Harrigan and C.~Fretter.
\newblock The unreasonable effectiveness of address clustering.
\newblock In {\em IEEE Conference on UIC, ATC, ScalCom, CBDCom, IoP,
  SmartWorld}, pages 368--373, 2016.

\bibitem{huang2018tracking}
Danny~Yuxing Huang, Maxwell~Matthaios Aliapoulios, Vector~Guo Li, Luca
  Invernizzi, Elie Bursztein, Kylie McRoberts, Jonathan Levin, Kirill
  Levchenko, Alex~C Snoeren, and Damon McCoy.
\newblock Tracking ransomware end-to-end.
\newblock In {\em 2018 IEEE Symposium on Security and Privacy (SP)}, pages
  618--631. IEEE, 2018.

\bibitem{TorDeanonimization}
Husam~Al Jawaheri, Mashael~Al Sabah, Yazan Boshmaf, and Aiman Erbad.
\newblock When {A} small leak sinks {A} great ship: Deanonymizing tor hidden
  service users through bitcoin transactions analysis.
\newblock {\em CoRR}, abs/1801.07501, 2018.

\bibitem{Jourdan2018}
Marc Jourdan, Sebastien Blandin, Laura Wynter, and Pralhad Deshpande.
\newblock Characterizing entities in the bitcoin blockchain.
\newblock In {\em Data Mining Workshop (ICDMW), 2018 IEEE International
  Conference on}, pages~--. IEEE, 2018.

\bibitem{BlockSci}
Harry~A. Kalodner, Steven Goldfeder, Alishah Chator, Malte M{\"{o}}ser, and
  Arvind Narayanan.
\newblock Blocksci: Design and applications of a blockchain analysis platform.
\newblock {\em CoRR}, abs/1709.02489, 2017.

\bibitem{NIPS2017_6907}
Guolin Ke, Qi~Meng, Thomas Finley, Taifeng Wang, Wei Chen, Weidong Ma, Qiwei
  Ye, and Tie-Yan Liu.
\newblock Lightgbm: A highly efficient gradient boosting decision tree.
\newblock In {\em Advances in Neural Information Processing Systems (NIPS)},
  pages 3146--3154, 2017.

\bibitem{kuter2007sunny}
Ugur Kuter and Jennifer Golbeck.
\newblock Sunny: A new algorithm for trust inference in social networks using
  probabilistic confidence models.
\newblock In {\em AAAI}, volume~7, pages 1377--1382, 2007.

\bibitem{lebichot2016graph}
Bertrand Lebichot, Fabian Braun, Olivier Caelen, and Marco Saerens.
\newblock A graph-based, semi-supervised, credit card fraud detection system.
\newblock In {\em International Workshop on Complex Networks and their
  Applications}, pages 721--733. Springer, 2016.

\bibitem{leskovec2009community}
Jure Leskovec, Kevin~J Lang, Anirban Dasgupta, and Michael~W Mahoney.
\newblock Community structure in large networks: Natural cluster sizes and the
  absence of large well-defined clusters.
\newblock {\em Internet Mathematics}, 6(1):29--123, 2009.

\bibitem{FirstFourYear}
Matthias Lischke and Benjamin Fabian.
\newblock Analyzing the bitcoin network: The first four years.
\newblock {\em Future Internet}, 8(1), 2016.

\bibitem{7796940}
D.~D.~F. Maesa, A.~Marino, and L.~Ricci.
\newblock Uncovering the bitcoin blockchain: An analysis of the full users
  graph.
\newblock In {\em 2016 IEEE International Conference on Data Science and
  Advanced Analytics (DSAA)}, pages 537--546, Oct 2016.

\bibitem{mcginn2018toward}
D.~McGinn, D.~McIlwraith, and Y.~Guo.
\newblock Toward open data blockchain analytics: {A} bitcoin perspective.
\newblock {\em CoRR}, abs/1802.07523, 2018.

\bibitem{meiklejohn2013fistful}
Sarah Meiklejohn, Marjori Pomarole, Grant Jordan, Kirill Levchenko, Damon
  McCoy, Geoffrey~M Voelker, and Stefan Savage.
\newblock A fistful of bitcoins: characterizing payments among men with no
  names.
\newblock In {\em Proceedings of the 2013 Conference on Internet Measurement},
  pages 127--140. ACM, 2013.

\bibitem{mimno2012topic}
David~M. Mimno and Andrew McCallum.
\newblock Topic models conditioned on arbitrary features with
  dirichlet-multinomial regression.
\newblock {\em CoRR}, abs/1206.3278, 2012.

\bibitem{priceofanonymity}
Malte M{\"o}ser and Rainer B{\"o}hme.
\newblock The price of anonymity: empirical evidence from a market for bitcoin
  anonymization.
\newblock {\em Journal of Cybersecurity}, 3(2):127--135, 2017.

\bibitem{Nakamoto}
Satoshi Nakamoto.
\newblock Bitcoin: A peer-to-peer electronic cash system, 2008.

\bibitem{ng2002discriminative}
Andrew~Y Ng and Michael~I Jordan.
\newblock On discriminative vs. generative classifiers: A comparison of
  logistic regression and naive bayes.
\newblock In {\em Advances in neural information processing systems (NIPS)},
  pages 841--848, 2002.

\bibitem{nick2015data}
Jonas~David Nick.
\newblock Data-driven de-anonymization in bitcoin.
\newblock Master's thesis, ETH-Z{\"u}rich, 2015.

\bibitem{paquet2018ransomware}
Masarah Paquet{-}Clouston, Bernhard Haslhofer, and Benoit Dupont.
\newblock Ransomware payments in the bitcoin ecosystem.
\newblock {\em CoRR}, abs/1804.04080, 2018.

\bibitem{ExchangeAddressClustering}
Stephen Ranshous, Cliff~A. Joslyn, Sean Kreyling, Kathleen Nowak, Nagiza~F.
  Samatova, Curtis~L. West, and Samuel Winters.
\newblock Exchange pattern mining in the bitcoin transaction directed
  hypergraph.
\newblock In {\em Financial Cryptography and Data Security, Sliema, Malta,
  April 7, 2017}, pages 248--263, 2017.

\bibitem{singla2006entity}
Parag Singla and Pedro Domingos.
\newblock Entity resolution with markov logic.
\newblock In {\em International Conference on Data Mining (ICDM)}, pages
  572--582. IEEE, 2006.

\bibitem{multiClass}
Kentaroh Toyoda, Tomoaki Othsuki, and P.~Takis Mathiopoulos.
\newblock Multi class bitcoin-enabled service identification based on
  transaction history summarization.
\newblock In {\em IEEE Conference on IoT, GCC, CPSC, SD, B, CIT, Congress on
  Cybermatics}, 2018.

\bibitem{jordan2004}
Martin~J Wainwright and Michael~I Jordan.
\newblock Graphical models, exponential families, and variational inference.
\newblock {\em Foundations and Trends{\textregistered} in Machine Learning},
  1(1--2):1--305, 2008.

\bibitem{wilder122018maximizing}
Bryan Wilder, Nicole Immorlica, Eric Rice, and Milind Tambe.
\newblock Maximizing influence in an unknown social network.
\newblock In {\em AAAI}, 2018.

\end{thebibliography}
\bibliographystyle{plain}
\end{document}